\documentclass[11pt]{article}

\usepackage[utf8]{inputenc}
\usepackage[T1]{fontenc}
\usepackage[a4paper,margin=2.5cm]{geometry}
\usepackage{amsmath,amssymb,amsthm}
\usepackage{graphicx}
\usepackage{booktabs}
\usepackage[table]{xcolor}
\usepackage{array}
\usepackage[labelfont=bf,labelsep=period,font=small]{caption}
\usepackage{authblk}
\usepackage{enumitem}
\usepackage[colorlinks=true,linkcolor=blue!70!black,citecolor=blue!70!black,urlcolor=blue!70!black]{hyperref}

\newcommand{\norm}[1]{\lVert #1\rVert}
\newcommand{\ip}[2]{\langle #1, #2\rangle}
\newcommand{\Com}{\mathsf{Com}}
\newtheorem{definition}{Definition}
\newtheorem{lemma}{Lemma}
\newtheorem{proposition}{Proposition}

\usepackage[numbers,sort&compress]{natbib}
\usepackage[ruled,linesnumbered]{algorithm2e}

\definecolor{recrow}{gray}{0.90}

\title{PVBRF-ID: Private, Verifiable and Byzantine-Robust Federated Intrusion Detection for Oilfield Enterprise Networks}

\author[1]{Mustafa S. Aljumaily}
\author[2]{Ali Ataeemh Allami}
\author[2]{Mustafa N. Mnati}
\author[2]{Nawar S. Alseelawi}
\author[3]{Shahab Abdulla}
\affil[1]{Daw Alfada Company, Amarah, Maysan, Iraq \\ \href{mailto:mustafa.s@daw-alfada.com}{mustafa.s@daw-alfada.com}}
\affil[2]{University of Misan, Amarah, Maysan, Iraq \\ \href{mailto:ali_ataeemh@uomisan.edu.iq}{ali\_ataeemh@uomisan.edu.iq} \quad \href{mailto:Mustafa.n@uomisan.edu.iq}{Mustafa.n@uomisan.edu.iq} \quad \href{mailto:nawar.alseelawi@uomisan.edu.iq}{nawar.alseelawi@uomisan.edu.iq}}
\affil[3]{University of Southern Queensland, Toowoomba, QLD, Australia \\ \href{mailto:Shahab.abdulla@unisq.edu.au}{Shahab.abdulla@unisq.edu.au}}
\date{}

\begin{document}
\maketitle

\begin{abstract}
\noindent Federated learning allows dispersed oilfield sites to train a shared intrusion detector without pooling network traffic, yet its aggregation step forces a choice between privacy and robustness. Byzantine-robust rules must read each update in the clear, whereas secure aggregation hides updates from the checks that detect poisoning. We present PVBRF-ID, a two-server protocol in which sites secret-share their updates and the servers jointly test each update's norm and its cosine similarity to a reference computed on a small clean root set, revealing one accept bit per site. Commitments, Merkle roots and a homomorphic audit make the inclusion of accepted updates publicly verifiable. On a 15-class Edge-IIoTset task with ten sites holding differently distributed data and ten random seeds, PVBRF-ID matched FedAvg on clean data (macro-F1 0.708 vs.\ 0.706) and withstood sign-flip attacks (0.696 vs.\ 0.091). Against an adaptive attacker controlling 30\% of sites, PVBRF-ID with secure clipping reached 0.641, whereas FedAvg fell to 0.046; plaintext multi-Krum remained higher by 0.052. An update-inference attacker recovered each site's class mix from plaintext updates (Spearman correlation 0.81) but not from one server's view. Backdoors passed the filter, and noise-based mitigation gave only heuristic protection. The implementation uses Python 3.13, NumPy 2.3, 30 rounds and 10 seeds.
\end{abstract}

\noindent\textbf{Keywords:} federated learning; intrusion detection; secure aggregation; Byzantine robustness; industrial Internet of Things; oil and gas

\section*{Introduction}

Oilfield operators run geographically dispersed sites, such as wellheads, gathering stations and pipeline segments, whose operational-technology (OT) and Internet-of-Things (IoT) traffic is heterogeneous. Integrating information technology (IT) with OT in these assets has widened their exposure to cyber attacks\cite{progoulakis2021offshore}. Bandwidth limits, data-sovereignty rules and operational sensitivity keep this traffic at the site, so an intrusion detection system (IDS) trained on pooled traffic is rarely an option\cite{khraisat2024flids}. Federated learning (FL)\cite{mcmahan2017fedavg} lets such sites train one detector while each keeps its traffic local. Public industrial IoT (IIoT) datasets such as Edge-IIoTset\cite{ferrag2022edge} were built with both centralized and federated evaluation in mind.

The aggregation step of FL pulls two requirements apart. Compromised sites can poison the shared model\cite{bagdasaryan2020backdoor,sun2019backdoor,fang2020local}, and the standard defences inspect each update in plaintext. Updates also leak the class mix and other properties of a site's data\cite{melis2019leakage,zhu2019dlg}. Secure aggregation\cite{bonawitz2017secagg} hides individual updates from the server, and with them the evidence a robust rule needs. This privacy--robustness tension has been documented across several federated learning settings\cite{lyu2022robustness}; here we build and evaluate an aggregator that resolves part of it.

\textit{Secure and verifiable aggregation.} Pairwise-masking secure aggregation\cite{bonawitz2017secagg} hides each update from the server. Prio\cite{corrigan2017prio} secret-shares client submissions across non-colluding servers and validates them with proofs. RoFL\cite{lycklama2023rofl} combines commitments with zero-knowledge proofs that bound the $L_2$ norm of an update, and ELSA\cite{rathee2023elsa} aggregates with two servers while tolerating malicious participants. Prio, RoFL and ELSA bound the size or check the well-formedness of an update, but none tests its direction against a trusted reference. FLOD\cite{dong2021flod} adds such a test: two non-colluding semi-honest servers use secure two-party computation to compare binarized client updates with an update from a small root dataset, and they reveal only the weighted aggregate. ShieldFL\cite{ma2022shieldfl} measures cosine similarity between gradients encrypted under two-trapdoor homomorphic encryption. Neither FLOD nor ShieldFL gives a client a way to verify that its accepted update entered the published aggregate, and FLOD's evaluation uses Gaussian and label-flipping attacks on image benchmarks rather than an attacker adapted to the test itself. Privacy-preserving aggregation in networked and transportation settings has been studied alongside these protocols\cite{allami2018watermark,nicewarner2024aggregation}.

\textit{Byzantine-robust aggregation.} Krum and multi-Krum\cite{blanchard2017krum} keep the updates closest to their neighbours, and coordinate-wise trimmed mean\cite{yin2018trimmed} discards extreme coordinates. FLTrust\cite{cao2021fltrust} weights each update by its cosine similarity to an update computed on a small clean root dataset. Adaptive attacks crafted against a known defence\cite{fang2020local} show that each rule must face an attacker who knows it. All three rules read every update in plaintext, so none runs inside secure aggregation without change.

\textit{Backdoors, leakage and differential privacy.} Boosted model replacement injects backdoors that survive averaging\cite{bagdasaryan2020backdoor}; norm clipping and added noise reduce their effect but do not remove it\cite{sun2019backdoor}. Leakage from gradients and updates\cite{melis2019leakage,zhu2019dlg} motivates differentially private (DP) training, which clips updates and adds Gaussian noise\cite{abadi2016dpsgd,mcmahan2018dpfedavg} and is accounted with R\'enyi DP\cite{mironov2017rdp}. Federated IDS for IoT and IIoT are surveyed in~\cite{nguyen2021fliot}. DeepFed\cite{li2021deepfed} encrypts model parameters with the Paillier cryptosystem but does not test robustness to poisoning. Rashid \textit{et al.}\cite{rashid2023fliiot} evaluate FedAvg on Edge-IIoTset under a trusted aggregator and leave poisoning defences to future work. Neither measures what the aggregator learns about each site or whether aggregation excludes sites with atypical traffic. Oblivious policy enforcement and privacy-preserving task coordination in industrial and mobile network settings\cite{allami2025firewall,nicewarner2026crowdsensing} motivate extending these protections to the aggregation layer.

We ask whether one aggregator can be private and robust at the same time, at a cost an operator can pay. Our contributions are:
\begin{enumerate}
  \item \textbf{Design.} We introduce PVBRF-ID, a two-server, secret-shared aggregator. The servers test each update's norm and a tolerant cosine against a root-set reference and reveal one accept bit per site. Published commitments, Merkle roots and an audit make the inclusion of accepted updates verifiable.
  \item \textbf{A design finding.} A reference taken from the previous aggregate is public and attacker-influenced. An adaptive attacker captures it and drives macro-F1 to 0.012; a clean root set avoids this failure.
  \item \textbf{A worst-case evaluation.} We benchmark PVBRF-ID on a 15-class Edge-IIoTset task (ten sites, ten seeds, paired statistics) against FedAvg, trimmed mean, multi-Krum and norm-only aggregation, under five non-adaptive attacks and a boundary-seeking adaptive attacker that exploits the cosine tolerance.
  \item \textbf{Measured privacy and fairness.} An inference attacker recovers class mix from plaintext updates. Multi-Krum systematically excludes atypical sites, whereas PVBRF-ID with a suitable norm bound does not.
  \item \textbf{Negative results.} Backdoors pass the filter, noise-based mitigation is heuristic, and plaintext multi-Krum is more robust at 30\% adversaries.
\end{enumerate}

Experiments use public IIoT traffic with simulated sites, and no operator data are used. Four cryptographic components are modelled as ideal functionalities. The Methods section defines the threat model (Definition~\ref{def:threat}), the protocol and the experimental setup. The Results section reports accuracy, robustness, privacy, fairness and cost. The Discussion states when PVBRF-ID is the right choice and what bounds its scope.

\section*{Methods}

\subsection*{System, threat model and goals}

There are $n$ sites, each holding labelled local traffic. Two servers $S_1$ and $S_2$ operate in separate administrative domains, for example a corporate security operations centre and a regional or contractor domain, and share a public bulletin board. The servers hold a small clean root set, for example captured on a testbed.

\begin{definition}[Threat model]\label{def:threat}
The protocol operates against three adversary classes.

\noindent\textbf{(i) Byzantine sites.} Up to $f < n/2$ of the $n$ sites are Byzantine. A Byzantine site may submit any vector as its update, coordinate with other Byzantine sites, and adapt its strategy to the full public state, including the bound $B_t$, the threshold $\tau$, the published reference direction $\hat{r}_t$, and the accept bitmaps of previous rounds. Byzantine sites cannot forge another site's Pedersen commitment, break the secret-sharing randomness, or observe a honest site's local traffic.

\noindent\textbf{(ii) Honest-but-curious servers.} Each server $S_j$ follows the protocol exactly but records everything it sees and tries to infer individual site updates from its view: its own shares $\{s_i^j\}$, the accept bits $\{b_i\}$, the other server's published aggregate share, and all public transcripts. The two servers do not collude and do not share private state. A server cannot deviate from the secure-test protocol or alter an accept bit without detection by the audit.

\noindent\textbf{(iii) External verifier / transcript attacker.} A passive external adversary observes the public bulletin board in every round: the published aggregate $\Delta_t$, the accept bitmap $\{b_i\}$, the Merkle roots, and the global model sequence $\{w_t\}$. This adversary tries to infer site-level class distributions from the published transcript; it cannot see either server's private shares.

\noindent\textbf{Out of scope.} Collusion between $S_1$ and $S_2$; compromise of the root set; denial-of-service against the bulletin board; Sybil enrolment of new sites; and membership inference or model inversion from the released global model weights.
\end{definition}

Under Definition~\ref{def:threat}, PVBRF-ID targets five goals. (G1) Clean accuracy is close to FedAvg. (G2) The model is robust to the attackers of the Attacks subsection. (G3) One server learns nothing about an individual update beyond one accept bit per site per round and the aggregate. (G4) Inclusion of accepted contributions is verifiable. (G5) Honest but atypical sites are not systematically excluded.

\subsection*{Protocol overview}

In each round $t$ (Fig.~\ref{fig:protocol}), the servers compute a root update $r_t$ from the root set and the current global model. Each site $i$ computes its update $u_i$ and sends one additive share to each server, together with commitments. The servers jointly test every update and learn only an accept bit $b_i$. Accepted updates, optionally clipped, are summed, the aggregate is revealed, and a transcript is published.

\begin{figure}[t]
  \centering
  \includegraphics[width=0.62\linewidth]{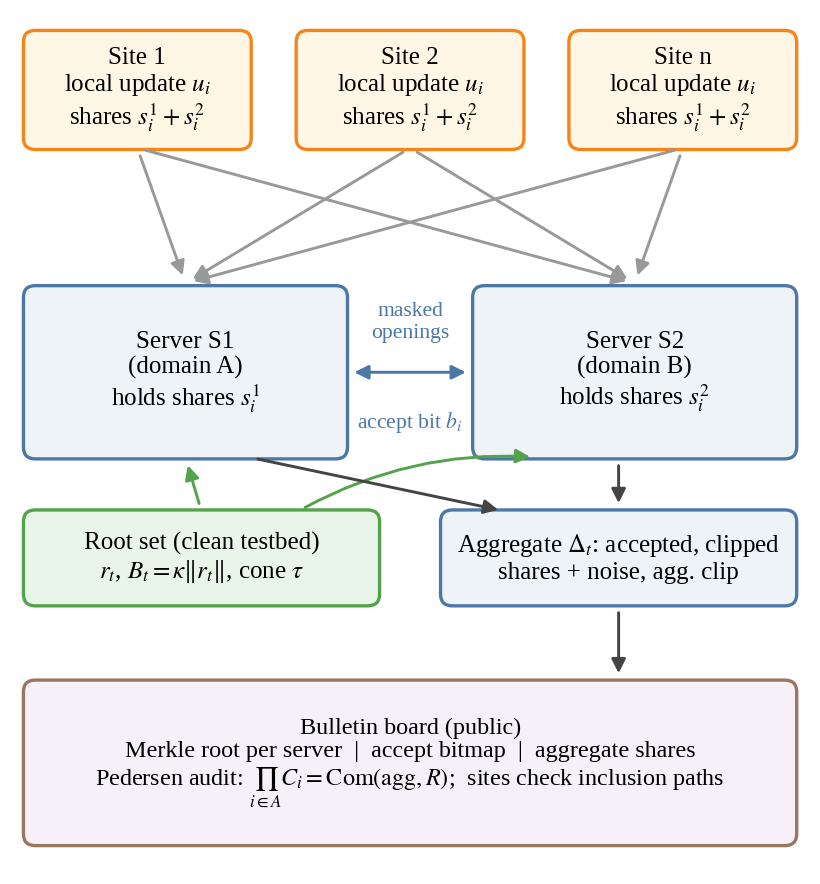}
  \caption{\textbf{PVBRF-ID.} Sites secret-share their updates to two servers in separate administrative domains. The servers test each update against a root-set reference and reveal one accept bit per site. Per-server Merkle roots, the accept bitmap, the aggregate shares and a Pedersen audit are published on a bulletin board, and each site checks the inclusion path of its commitment.}
  \label{fig:protocol}
\end{figure}

\begin{algorithm}[t]
\DontPrintSemicolon
\caption{PVBRF-ID aggregation round $t$}
\label{alg:pvbrf}
\KwIn{Global model $w_t$; root set $\mathcal{R}$; public parameters $\kappa$, $\tau$, learning rate $\eta$}
\KwOut{Updated model $w_{t+1}$; public transcript $(B_t,\hat{r}_t,\{b_i\},A_1,A_2,R_1,R_2,\text{Merkle roots})$}
\BlankLine
\tcc{--- Server setup ---}
Servers jointly train on $\mathcal{R}$ with $w_t$ to obtain root update $r_t$\;
Publish $B_t \leftarrow \kappa\norm{r_t}$ and $\hat{r}_t \leftarrow r_t/\norm{r_t}$\;
Dealer supplies $n$ Beaver triples $(\langle a_i\rangle, \langle \norm{a_i}^2\rangle)$ and $n$ scalar triples for cosine tests\;
\BlankLine
\tcc{--- Site submission (all sites $i = 1,\dots,n$ in parallel) ---}
\ForEach{site $i$}{
  Compute local update $u_i \leftarrow \mathrm{SGD}(w_t, \mathcal{D}_i)$\;
  Sample $s_i^1 \overset{\$}{\leftarrow} \mathbb{F}_Q^d$;\quad set $s_i^2 \leftarrow (u_i - s_i^1) \bmod Q$\;
  Generate Pedersen commitments $C_i^j \leftarrow \Com(s_i^j;\,\rho_i^j)$ for $j \in \{1,2\}$\;
  Send $(s_i^j,\, C_i^j,\, \rho_i^j)$ to server $S_j$\;
}
\BlankLine
\tcc{--- Per-site checks (servers, all $i$ in parallel) ---}
\ForEach{site $i$}{
  $S_j$ verifies $C_i^j$ opens to $s_i^j$; set $b_i \leftarrow 0$ and skip on failure\;
  Range proofs (idealized) confirm $|u_{i,k}| \le 2^{20}$ for all coordinates $k$; reject on failure\;
  Servers run Beaver protocol on shares to obtain $\langle\norm{u_i}^2\rangle$ and $\langle \delta_i^2\rangle$ where $\delta_i = \langle u_i, \hat{r}_t\rangle$\;
  Ideal comparison reveals accept bit $b_i$ satisfying \eqref{eq:accept}--\eqref{eq:tolerant}\;
}
\BlankLine
\tcc{--- Aggregation and publish ---}
\ForEach{server $j \in \{1,2\}$}{
  $A_j \leftarrow \sum_{i:\,b_i=1} s_i^j$ \quad (optional: apply secure clipping and Gaussian noise inside sum)\;
  $R_j \leftarrow \sum_{i:\,b_i=1} \rho_i^j$\;
  Publish Merkle root over $\{C_i^j\}$, bitmap $\{b_i\}$, aggregate share $A_j$, and $R_j$\;
}
\BlankLine
\tcc{--- Reveal, audit, and model update ---}
$\Delta_t \leftarrow (A_1 + A_2)\;/\;|\mathcal{A}_t|$\;
$w_{t+1} \leftarrow w_t + \eta\,\Delta_t$\;
Anyone verifies equation~\eqref{eq:audit} and Merkle paths for each accepted site $i$\;
\end{algorithm}

\subsection*{Secret sharing}

Each coordinate is encoded as a fixed-point integer with 16 fractional bits in the prime field $\mathbb{F}_Q$, $Q=2^{127}-1$, and is bounded by $|x|\le2^{20}$ through a per-coordinate range proof (idealized). A site splits $u_i = s_i^1 + s_i^2$ with $s_i^1$ uniform, so each share alone is uniformly distributed and reveals nothing about $u_i$.

\subsection*{Secure acceptance tests}

Let $\hat r_t = r_t/\norm{r_t}$ be the unit reference. The public norm bound is
\begin{equation}\label{eq:bound}
  B_t = \kappa\,\norm{r_t},
\end{equation}
where $\kappa$ is a public multiplier. An update is accepted if and only if
\begin{equation}\label{eq:accept}
  \underbrace{\norm{u_i}\le B_t}_{\text{bounded influence}} \;\wedge\; \underbrace{\cos(u_i, r_t)\ge\tau}_{\text{alignment with the root reference}},
\end{equation}
where $\tau$ is the public cosine threshold. With $\delta_i=\ip{u_i}{\hat r_t}$, the cosine condition in \eqref{eq:accept} becomes $\delta_i\ge0 \wedge \delta_i^2\ge\tau^2\norm{u_i}^2$ for $\tau\ge0$. For the tolerant case $\tau<0$, which admits mildly misaligned honest updates from sites with skewed class mixes, it becomes
\begin{equation}\label{eq:tolerant}
  \delta_i\ge0 \;\vee\; \delta_i^2\le\tau^2\norm{u_i}^2 .
\end{equation}
The inner product $\delta_i$ is linear in the shares and is computed locally by each server. The squares $\norm{u_i}^2$ and $\delta_i^2$ need one Beaver multiplication each\cite{beaver1991}. For the norm, the servers open $e=u_i-a$ for a dealer-supplied random vector $a$, which makes $e$ uniform, and compute
\begin{equation}\label{eq:beaver}
  \norm{u_i}^2 = \underbrace{\ip{e}{e}}_{\text{public, opened}} + \underbrace{2\ip{e}{a}}_{\text{linear in shares of } a} + \underbrace{\ip{a}{a}}_{\text{dealer-supplied share}} .
\end{equation}
Each server holds a share of every term in \eqref{eq:beaver}, and the sign tests in \eqref{eq:accept} and \eqref{eq:tolerant} are performed by an ideal comparison functionality that reveals only $b_i$. Because $\norm{r_t}$ shrinks as training converges, the bound in \eqref{eq:bound} adapts to the training stage. Lemma~\ref{lem:wrap} guarantees that the fixed-point arithmetic behind these tests never wraps around the field modulus. We calibrated $\kappa$ on the first six clean FedAvg rounds of seed 0, which is also one of the ten evaluation seeds. Over these rounds, the ratio $\norm{u_i}/\norm{r_t}$ for honest sites had median 1.22, 99th percentile 1.93 and maximum 2.07, and we fixed $\kappa=3.5$.

\begin{lemma}[No wrap-around]\label{lem:wrap}
Let every coordinate of $u_i$ satisfy $|u_{i,k}|\le2^{20}$ and be encoded with 16 fractional bits in $\mathbb{F}_Q$, $Q=2^{127}-1$. Let $|\tau|\le1$, $B_t<2^{47}$ and let the model dimension satisfy $d<2^{19}$. Then every quantity whose sign is tested in \eqref{eq:accept} and \eqref{eq:tolerant} has absolute value below $(Q-1)/2$. The signed reconstruction therefore equals the integer value, and the secure tests return the same bit as the plaintext rule applied to the encoded update.
\end{lemma}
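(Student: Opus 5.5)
The proof is a direct magnitude count. I would fix the integer encodings, bound every intermediate integer that enters a sign test, and check that each bound lies strictly below $(Q-1)/2=2^{126}-1$. Write $x_k=\lfloor 2^{16}u_{i,k}\rceil$ for the encoded coordinates, so $|x_k|\le 2^{36}$. Let $y=\lfloor 2^{16}\hat r_t\rceil$ be the encoded public reference, $\theta=\lfloor 2^{16}\tau\rceil$ the encoded threshold with $|\theta|\le 2^{16}$, and $\beta=\lfloor 2^{16}B_t\rceil<2^{63}$ the encoded bound. The tested quantities are then the field elements representing three integers:
\[
N=\textstyle\sum_k x_k^2,\qquad D=\textstyle\sum_k x_k y_k,\qquad \beta^2-N,\qquad D^2-\theta^2N,
\]
together with the sign of $D$ itself. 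Each test compares terms of the same scale: $N$ and $\beta^2$ carry $2^{32}$, and $D^2$ and $\theta^2 N$ carry $2^{64}$.

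The key steps, in order, are as follows.
\begin{enumerate}
  \item From $d<2^{19}$, get $N\le d\,2^{72}<2^{91}$, hence $\norm{x}<2^{45.5}$.
  \item Since $\hat r_t$ is a unit vector, $\norm{y}\le 2^{16}+\sqrt d/2<2^{16.01}$, the extra term being rounding. Cauchy--Schwarz then gives $|D|\le\norm{x}\norm{y}<2^{61.6}$, so $D^2<2^{123.2}$.
  \item Bound $\theta^2N<2^{32}\cdot2^{91}=2^{123}$. Since $D^2$ and $\theta^2 N$ are both nonnegative, $|D^2-\theta^2N|\le\max(D^2,\theta^2N)<2^{124}$.
  \item For the norm test, $0\le\beta^2<2^{126}$ by the strict hypothesis $B_t<2^{47}$, and $0\le N<2^{91}$. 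So $\beta^2-N$ lies in $(-2^{91},2^{126})$. Integrality of $\beta$ gives $\beta^2\le(2^{63}-1)^2<2^{126}-1$.
  \item Each integer above has absolute value below $(Q-1)/2$. The additive shares held by $S_1,S_2$ and the Beaver reconstruction in \eqref{eq:beaver} are exact identities in $\mathbb{F}_Q$. Hence the field element the ideal comparison receives is congruent mod $Q$ to the true integer, and the centred lift to $(-(Q-1)/2,(Q-1)/2]$ recovers that integer exactly. Its sign therefore equals the plaintext sign, and each clause of \eqref{eq:accept} and \eqref{eq:tolerant} evaluates identically. This yields the same bit $b_i$.
\end{enumerate}

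I expect the main obstacle to be the bookkeeping of fixed-point scales rather than any single inequality. The lemma must pin down which integers are actually compared: for example, whether $\tau^2$ is applied at scale $2^{32}$, or whether $D$ is rescaled before squaring. The lemma must also handle the rounding slack in $\norm{y}$, which is where a $\sqrt d$ term enters. I would first fix the convention that no truncation occurs and scales are matched by multiplying by public powers of two where needed. Under that convention the worst case is $D^2\approx2^{123.2}$, and matching scales by public multiplication by $2^{32}$ on the norm side is not needed because $\beta^2$ is already at scale $2^{32}$. This leaves a margin of roughly $2^{2}$ below $2^{126}$. Provided the encoded reference is normalised so that $\norm{y}\le 2^{16}(1+o(1))$, the strict inequality $B_t<2^{47}$ is the only tight hypothesis, and I would point this out in the proof.
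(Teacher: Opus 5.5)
Your proposal is correct and is the paper's argument: bound the encoded coordinates by $2^{36}$ and the rounded reference by $2^{16}+\sqrt d/2$, apply Cauchy--Schwarz to the inner product, use $\tilde\tau^2\le2^{32}$, and check $\tilde B_t^2<2^{126}$ for the norm test; your sharper constants ($\norm{y}<2^{16.01}$ rather than $2^{17}$, a $\max$ rather than a sum for $|D^2-\theta^2N|$) and your explicit centred-lift step are only refinements. One boundary detail, which the paper's proof also glosses over: under round-to-nearest, $B_t<2^{47}$ alone does not force $\beta<2^{63}$ (take $2^{16}B_t\in[2^{63}-\tfrac12,2^{63})$), so either encode $B_t$ by truncation or tighten the hypothesis slightly.
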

\begin{proof}
Encoded coordinates satisfy $|\tilde u_k|\le2^{36}$, so $\norm{\tilde u}^2\le d\,2^{72}$. The encoded unit reference has $|\tilde r_k|\le2^{16}|\hat r_{t,k}|+\tfrac12$, so $\norm{\tilde r}\le2^{16}+\sqrt d/2\le2^{17}$. By the Cauchy--Schwarz inequality, $|\tilde \delta_i|\le\sqrt d\,2^{53}$ and $\tilde \delta_i^{\,2}\le d\,2^{106}$. The encoded threshold satisfies $\tilde\tau^2\le2^{32}$, so $\tilde\tau^2\norm{\tilde u}^2\le d\,2^{104}$. Every tested difference is therefore below $d\,2^{107}<2^{126}$ for $d<2^{19}$. The norm test compares $\norm{\tilde u}^2$ with $\tilde B_t^2<2^{126}$, which holds for $B_t<2^{47}$.
\end{proof}
For the model in this study ($d=2{,}319$) the largest tested quantity is below $2^{119}$, which leaves seven bits of headroom. With every coordinate at the bound $|u_{i,k}|=2^{20}$, the secure tests and the plaintext rule returned the same bit in all eight tested cases ($\tau\in\{0.3,0,-0.3\}$, accepting and rejecting bounds).

\subsection*{Root-set reference and reference capture}

The reference $r_t$ is a local update computed on the root set; the servers know it, but the sites do not. The alternative, the previous aggregate, is public and shaped by whatever the servers accepted. An adaptive attacker can steer it so that attackers are accepted and honest sites rejected, a feedback loop we call reference capture.

\subsection*{Verifiable aggregation}

Each site posts vector Pedersen commitments\cite{pedersen1991} $C_i^j=\Com(s_i^j;\rho_i^j)$, $j\in\{1,2\}$, to the bulletin board and gives server $j$ its share $s_i^j$ and randomness $\rho_i^j$; server $j$ checks the opening privately. Each server publishes a Merkle root\cite{merkle1987} over its commitments, the accept bitmap, the aggregate share $A_j=\sum_{i:\,b_i=1}s_i^j$ and the summed randomness $R_j$. Anyone can check that the two bitmaps agree and that
\begin{equation}\label{eq:audit}
  \prod_{i:\,b_i=1} C_i^j = \Com(A_j;R_j), \qquad j\in\{1,2\},
\end{equation}
by the homomorphism of the commitment, and each site checks the Merkle path of its own commitment. A server that drops an accepted contribution, or publishes a bitmap that differs from the other server's, fails \eqref{eq:audit} or the bitmap check; unit tests cover both cases. The audit does not certify the accept decisions themselves. Shares carry no message-authentication codes, so a server that deviates inside the secure tests can change a decision consistently on both bitmaps; this behaviour is excluded by item (ii) of Definition~\ref{def:threat}.

\subsection*{Security and privacy analysis (informal)}

The following three propositions characterize the security guarantees of PVBRF-ID under Definition~\ref{def:threat}. Because secure comparison and per-coordinate range proofs are idealized functionalities, the arguments below are proof sketches rather than full reductions; they identify the cryptographic assumptions each claim rests on.

\begin{proposition}[Single-server privacy]\label{prop:privacy}
Under Definition~\ref{def:threat}(ii), an honest-but-curious server $S_j$ cannot distinguish two collections of site updates $\{u_i\}$ and $\{u_i'\}$ that produce the same accept bitmap $\{b_i\}$ and the same aggregate $\sum_{i:b_i=1} u_i$.
\end{proposition}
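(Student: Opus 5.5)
The plan is a simulation argument in the ideal-functionality model that the paper adopts for range proofs and comparison. I will build a simulator $\mathsf{Sim}_j$ that receives only the public transcript, the accept bitmap $\{b_i\}$ and the aggregate $\Sigma=\sum_{i:b_i=1}u_i$. From these it must produce a view distributed identically to the real view of $S_j$. Once the simulator's output depends only on $(\{b_i\},\Sigma)$ and the public data, two collections $\{u_i\}$ and $\{u_i'\}$ that agree on these quantities give identical simulated views. They therefore give identical real views, which proves the statement. Because the dealer and the comparison and range functionalities are ideal, the argument is perfect (information-theoretic) apart from the Pedersen step discussed below.

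I will list the components of the view of $S_j$ and simulate them in the order the protocol produces them.

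\textbf{(1) Own shares.} $s_i^j$ is uniform on $\mathbb{F}_Q^d$ independently of $u_i$ (for $j=2$, because $s_i^1$ is uniform and $s_i^2=u_i-s_i^1$). The simulator samples it uniformly.

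\textbf{(2) Commitments and openings.} The simulator computes $C_i^j=\Com(s_i^j;\rho_i^j)$ honestly from its simulated share and fresh randomness. The commitments $C_i^{3-j}$ of the other server appear on the bulletin board, so the simulator replaces them with commitments to random vectors. Perfect hiding of Pedersen commitments makes these distributed identically to the real ones. Perfect hiding also covers the Merkle root over $\{C_i^{3-j}\}$, which the simulator computes from its own simulated commitments.

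\textbf{(3) Beaver openings.} In \eqref{eq:beaver} the opened value $e=u_i-a$ is uniform because $a$ is a fresh dealer mask. The scalar-triple openings for $\delta_i^2$ are uniform for the same reason. The simulator samples them uniformly and derives its dealer shares consistently. The comparison functionality outputs only $b_i$, which the simulator is given. Lemma~\ref{lem:wrap} ensures that this bit equals the plaintext decision, so the ideal output is a deterministic function of $u_i$ already captured by $\{b_i\}$.

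\textbf{(4) Other server's aggregate share.} In the real view $A_{3-j}=\Sigma-A_j$ and $R_{3-j}$ are fixed, so the simulator sets $A_{3-j}=\Sigma-A_j$. It chooses $R_{3-j}$ so that the audit \eqref{eq:audit} holds. This is the step I expect to be the main obstacle. The product $\prod_{i:b_i=1}C_i^{3-j}$ must equal $\Com(A_{3-j};R_{3-j})$, yet the simulator committed to random vectors in step (2). The commitments and the audit values must be jointly consistent, not just marginally correct.

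I would handle this in one of two ways. The first uses the equivocation property of Pedersen commitments with a simulator holding the trapdoor $\log_g h$. The simulator opens the product of the accepted commitments to $\Sigma-A_j$ by solving for $R_{3-j}$. This gives a view identical to the real one, since in the real world $\rho_i^{3-j}$ is uniform and independent of everything else. The second avoids the trapdoor. It simply observes that, given $A_{3-j}$, the value $R_{3-j}$ is uniform conditioned on the products of commitments. The conditional distribution is therefore a function of $(\Sigma,\{b_i\})$ only, because the commitment group element is perfectly hiding.

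The optional clipping and Gaussian noise are applied inside the sum, so they are handled by replacing $\Sigma$ with the clipped and noised aggregate. Finally, the simulated view depends on the inputs only through $(\{b_i\},\Sigma)$ and is distributed identically to the real view, so the two collections are indistinguishable. This is perfect indistinguishability in the hybrid model, and computational if the discrete-log trapdoor route is made explicit.
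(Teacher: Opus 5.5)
Your argument is correct and rests on the same core idea as the paper's sketch. The simulator draws $S_j$'s own shares uniformly, takes $b_i$ from the ideal comparison, and treats $\Sigma$ as intended leakage, and with it $A_{3-j}=\Sigma-A_j$ and $|\mathcal{A}_t|$.

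Where you differ is coverage. The paper's simulator produces only $\{s_i^j\}$ and $A_j$ and asserts that the rest of the view matches. It never accounts for the Beaver and scalar openings. Nor does it account for the other server's commitments $C_i^{3-j}$ on the bulletin board, which \eqref{eq:audit} ties to the published $(A_{3-j},R_{3-j})$. You simulate these explicitly and correctly note that the commitments and the audit values must be jointly, not just marginally, consistent. The perfect hiding of Pedersen commitments closes this, and the paper's sketch never invokes it. The paper's version buys brevity; yours covers the whole view listed in Definition~\ref{def:threat}(ii).

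Three small corrections.

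\emph{Your second route.} Given the product of the accepted commitments and $A_{3-j}$, the value $R_{3-j}$ is determined, not uniform. The precise statement is that the $C_i^{3-j}$ are i.i.d.\ uniform group elements and $R_{3-j}$ is the unique exponent with $\prod_{i:b_i=1}C_i^{3-j}=\Com(A_{3-j};R_{3-j})$. This law depends only on $A_{3-j}$. The route suffices on its own, because the statement only needs the law of the real view to depend on $(\{b_i\},\Sigma)$; no efficient simulator is required.

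\emph{The ``computational'' hedge.} It is unnecessary. A commitment key generated with a known trapdoor has the honest distribution, so the trapdoor simulator is also perfect.

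\emph{Clipping and noise.} This remark lies outside the statement and is not justified as written. Equation \eqref{eq:audit} binds $A_j$ to commitments of the unclipped shares. With clipping inside the sum, either the audit must change or the view also reveals the unclipped sum.
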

\begin{proof}[Sketch]
$S_j$ observes: (a) its own shares $\{s_i^j\}$, (b) the accept bits $\{b_i\}$ revealed by the ideal comparison functionality, (c) its aggregate share $A_j$, and (d) the public transcript. Share $s_i^1$ is drawn uniformly at random from $\mathbb{F}_Q^d$ independently of $u_i$; consequently, the marginal distribution of $S_1$'s share vector is $\mathcal{U}(\mathbb{F}_Q^d)$ regardless of $u_i$. A simulator that samples each $s_i^1$ uniformly and sets $A_1 = \sum_{i:b_i=1} s_i^1$ produces a view identically distributed to the real protocol. $S_2$'s analysis is symmetric. The only residual leakage to $S_j$ is the number of accepted sites $|\mathcal{A}_t|$ (from $|\{b_i:b_i=1\}|$) and the published aggregate, both intentionally revealed. The ideal comparison functionality leaks nothing beyond the accept bit by assumption.
\end{proof}

\begin{proposition}[Bounded Byzantine influence]\label{prop:robust}
Under Definition~\ref{def:threat}(i), let $f < n/2$ sites be Byzantine. The $\ell_2$ displacement of the aggregate update caused by all accepted Byzantine sites satisfies
\[
  \Bigl\|\sum_{i \in \mathcal{A}_t \cap \mathcal{F}} u_i \Bigr\| \;\le\; f \cdot B_t \;=\; f\kappa\|r_t\|,
\]
where $\mathcal{F}$ is the Byzantine set. Consequently, the per-round displacement of the averaged update $\Delta_t$ attributable to Byzantine sites is at most $f\kappa\|r_t\|/|\mathcal{A}_t|$ in any direction, where $|\mathcal{A}_t|$ is the number of accepted sites.
\end{proposition}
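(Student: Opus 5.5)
The plan is to reduce the bound to the norm half of the acceptance rule \eqref{eq:accept} and then apply the triangle inequality. The first step is to fix what ``accepted'' means. A Byzantine site $i\in\mathcal{F}$ can enter $\mathcal{A}_t$ only if $b_i=1$. By Lemma~\ref{lem:wrap}, the secure test returns the same bit as the plaintext rule on the encoded update. By Definition~\ref{def:threat}(ii), neither server can alter that bit without being caught by the audit \eqref{eq:audit} and the bitmap check. Commitment openings and the idealized range proofs also hold, so every $i\in\mathcal{A}_t\cap\mathcal{F}$ satisfies the plaintext condition $\norm{u_i}\le B_t$. The site is free to choose $u_i$ adaptively from $B_t$, $\hat r_t$ and past bitmaps, but that does not matter: the argument uses only the fact that the submitted vector passed the test.

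The second step is the triangle inequality:
\[
\Bigl\|\sum_{i\in\mathcal{A}_t\cap\mathcal{F}}u_i\Bigr\|\le\sum_{i\in\mathcal{A}_t\cap\mathcal{F}}\norm{u_i}\le|\mathcal{A}_t\cap\mathcal{F}|\,B_t\le f B_t .
\]
Substituting \eqref{eq:bound} gives $f\kappa\norm{r_t}$. The cosine condition is not needed for this bound; it only narrows the set of directions the attacker can use. The hypothesis $f<n/2$ is also not used in the inequality itself. Its role is to make the Byzantine set well defined and to keep the accepted set nonempty in the intended regime.

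For the corollary about $\Delta_t$, I would write
\[
\Delta_t=\frac{1}{|\mathcal{A}_t|}\Bigl(\sum_{i\in\mathcal{A}_t\setminus\mathcal{F}}u_i+\sum_{i\in\mathcal{A}_t\cap\mathcal{F}}u_i\Bigr),
\]
and interpret the displacement attributable to Byzantine sites as the second term divided by $|\mathcal{A}_t|$. Its norm is at most $f\kappa\norm{r_t}/|\mathcal{A}_t|$, and because this is an $\ell_2$ norm the bound holds in every direction. If $\mathcal{A}_t$ is empty, no update is made and the claim is vacuous.

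The main obstacle is interpretive rather than computational. ``Displacement'' has to be stated carefully. Byzantine sites also change $|\mathcal{A}_t|$, which rescales the honest contribution, so a comparison against the honest-only average would add a term of order $\bigl|1/|\mathcal{A}_t|-1/|\mathcal{A}_t\setminus\mathcal{F}|\bigr|\cdot\|\sum_{\text{honest}}u_i\|$. I would state explicitly that the proposition bounds the additive Byzantine term at the realized $|\mathcal{A}_t|$. A second subtlety concerns fixed-point rounding: the test applies to the encoded update, so the bound is exact for the encoded vectors and holds up to a rounding slack of $\sqrt d\,2^{-17}$ per site for the real-valued vectors. I would note this explicitly rather than ignore it.
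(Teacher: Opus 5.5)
Your proposal is correct and is essentially the paper's argument. The norm half of \eqref{eq:accept} gives $\norm{u_i}\le B_t$ for every accepted Byzantine site, the triangle inequality over at most $f$ such sites gives $fB_t=f\kappa\norm{r_t}$, and dividing by $|\mathcal{A}_t|$ gives the per-round bound; as in the paper's sketch, neither the cosine test nor the hypothesis $f<n/2$ is used.

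One small correction: the correctness of the bits $b_i$ comes from the honest-but-curious clause of Definition~\ref{def:threat}(ii), not from the audit. The paper states that the audit does not certify accept decisions, and that a deviating server could change a bit consistently on both bitmaps. Your explicit handling of the $|\mathcal{A}_t|$-rescaling of the honest term and of fixed-point rounding is a useful sharpening that the paper leaves implicit.
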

\begin{proof}[Sketch]
For any accepted site $i$, the norm test in \eqref{eq:accept} enforces $\|u_i\| \le B_t$. An accepted Byzantine site therefore contributes at most $B_t$ in $\ell_2$ norm. Summing over at most $f$ accepted Byzantine sites and dividing by the number of accepted sites $|\mathcal{A}_t|$, which is the denominator of $\Delta_t$ in Algorithm~\ref{alg:pvbrf}, gives the bound. The bound weakens when honest sites are rejected, which is why the honest rejection rate is reported throughout the Results section. The range proof (step 5 of Algorithm~\ref{alg:pvbrf}) prevents a Byzantine site from encoding an out-of-range value that passes the norm test through fixed-point wraparound; this is guaranteed formally by Lemma~\ref{lem:wrap}. The cosine test in \eqref{eq:accept}--\eqref{eq:tolerant} further restricts the direction of any accepted update, but the norm bound alone suffices for this proposition.
\end{proof}

\begin{proposition}[Audit soundness]\label{prop:audit}
Under Definition~\ref{def:threat}(iii), suppose a server publishes $(A_j, R_j, \{b_i\}, \mathrm{root}_j)$ such that equation~\eqref{eq:audit} holds for $j \in \{1,2\}$ and all Merkle paths verify. Then $A_j = \sum_{i:\,b_i=1} s_i^j \pmod{Q}$, except with probability negligible in the security parameter of the Pedersen commitment scheme.
\end{proposition}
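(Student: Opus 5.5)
The plan is a reduction to the binding property of Pedersen commitments, combined with collision resistance of the hash behind the Merkle tree. Fix $j$ and write $\mathcal{A}=\{i:b_i=1\}$. The published root $\mathrm{root}_j$ is a binding digest of the list $(C_i^j)_i$. So once every accepted site's Merkle path verifies against $\mathrm{root}_j$, the commitments entering the product in \eqref{eq:audit} are exactly the $C_i^j$ the sites posted. Otherwise the two different leaves at the same position under one root give a hash collision, which happens with negligible probability. This step pins down the left-hand side of \eqref{eq:audit} as $\prod_{i\in\mathcal{A}}C_i^j$ for the genuine site commitments.

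Next I use the homomorphism. Each posted $C_i^j=\Com(s_i^j;\rho_i^j)$ was formed by site $i$. With $\Com(x;\rho)=\prod_k g_k^{x_k}h^{\rho}$ in a group of prime order (taken equal to, or compatible with, $Q$ so that share arithmetic mod $Q$ matches exponent arithmetic), we get
\[
\prod_{i\in\mathcal{A}}C_i^j=\Com\Bigl(\textstyle\sum_{i\in\mathcal{A}}s_i^j;\ \sum_{i\in\mathcal{A}}\rho_i^j\Bigr).
\]
Write $(A^*,R^*)$ for this honest opening. If \eqref{eq:audit} holds with published $(A_j,R_j)$, then $\Com(A_j;R_j)=\Com(A^*;R^*)$.

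Now suppose $A_j\neq A^*\pmod Q$. Then $(A_j,R_j)$ and $(A^*,R^*)$ are two distinct openings of the same commitment. I turn this into a discrete-log relation among the independent generators $g_1,\dots,g_d,h$:
\[
\prod_k g_k^{A_{j,k}-A^*_k}=h^{R^*-R_j}
\]
with a nonzero exponent vector. A reduction that embeds a discrete-log challenge into a randomly chosen generator recovers that discrete log with probability at least $1/(d+1)$ times the adversary's success. So under the discrete-log assumption the event has negligible probability, and $A_j=\sum_{i\in\mathcal{A}}s_i^j$. Applying this to both $j$, together with the public bitmap-equality check, gives the claim.

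The main obstacle is making the reduction meaningful when the cheating server itself knows the honest opening $(A^*,R^*)$. The reduction must be able to simulate that knowledge. This is fine because the reduction plays the sites and chooses the $s_i^j,\rho_i^j$ itself. I also have to be careful that the field modulus $Q$ of the shares matches the commitment group order. If they differ, I would state the conclusion modulo the group order and note that the range bounds of Lemma~\ref{lem:wrap} make it coincide with equality mod $Q$ for legitimately formed shares.
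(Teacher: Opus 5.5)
Your proposal is correct and takes the same route as the paper's sketch. The homomorphism turns a passing audit into two openings of one Pedersen commitment, binding under discrete log rules out a second opening, and the Merkle check fixes which commitments enter the product; you make the discrete-log reduction and the hash-collision step explicit where the paper only asserts them.

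One small correction to your closing caveat: Lemma~\ref{lem:wrap} bounds the encoded updates, not the shares, which are uniform in $\mathbb{F}_Q$. If the group order $q$ differs from $Q$, what you need instead is that the integer sum of at most $n$ canonical share representatives lies below $nQ$, so congruence modulo $q$ pins it down whenever $q > nQ$.
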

\begin{proof}[Sketch]
Each commitment $C_i^j = \mathsf{Com}(s_i^j;\,\rho_i^j)$ was posted to the bulletin board before the round began and verified by the server at opening (step 4 of Algorithm~\ref{alg:pvbrf}). The homomorphic property of Pedersen commitments gives $\prod_{i:\,b_i=1} C_i^j = \mathsf{Com}(\sum_{i:\,b_i=1} s_i^j;\, \sum \rho_i^j)$. Equation~\eqref{eq:audit} asserts this equals $\mathsf{Com}(A_j;\,R_j)$. Because Pedersen commitments are computationally binding under the discrete-logarithm assumption, the only $(A_j, R_j)$ satisfying \eqref{eq:audit} for fixed $\{C_i^j\}$ and $\{b_i\}$ is the honest one. A server that drops an accepted share $s_i^j$ produces a product that is short by $C_i^j$, breaking \eqref{eq:audit}. Merkle path verification ensures the commitments were posted before the accept decisions, preventing a server from substituting commitments after seeing the bitmap.
\end{proof}

\subsection*{Optional mitigations}

We evaluate three optional mitigations. (i) Secure clipping rescales each accepted update to norm at most $C_t=\norm{r_t}$. (ii) Gaussian noise with standard deviation $zC_t/|\mathcal{A}_t|$ per coordinate is added inside the aggregate, where $\mathcal{A}_t$ is the accepted set and $z$ the noise multiplier. (iii) The aggregate clip bounds each coordinate $k$ of the revealed aggregate by $3|r_{t,k}|+0.5\,\mathrm{rms}(r_t)$, a public post-processing step. We report the privacy budget by R\'enyi-DP accounting\cite{mironov2017rdp} of the Gaussian noise over $T$ rounds, with sensitivity $C_t$, client-level add/remove adjacency and full participation:
\begin{equation}\label{eq:eps}
  \varepsilon = \min_{\alpha>1}\left( \frac{T\alpha}{2z^2} + \frac{\ln(1/\delta)}{\alpha-1} \right).
\end{equation}
The accept bits and $|\mathcal{A}_t|$ depend on the data and are outside the accounting in \eqref{eq:eps}.

\subsection*{Implemented and idealized components}

Secret sharing, the Beaver-based norm and cosine tests including the tolerant case, SHA-256 Merkle trees, Pedersen commitments and the homomorphic audit are implemented and covered by 12 unit tests, including tamper detection. Idealized are the secure comparison (it reveals only the accept bit, and its communication is a placeholder), the per-coordinate range proofs, triple generation by a trusted dealer, and secure clipping with noise addition inside the aggregate. The Pedersen group is a toy 512-bit group, so timings are upper bounds and the group is not secure. In the training runs, commitments and opening checks are switched off to save time; they do not change the accept bits or the aggregate, and the cost benchmark runs them in full.

\subsection*{Data and features}

We used the DNN file of Edge-IIoTset\cite{ferrag2022edge}: 2,219,201 flows in 15 classes (Normal and 14 attack types). Each flow was retained independently with probability $\min(1, c/N_k)$, where $N_k$ is the class count in the file and the cap $c$ is 100,000 for Normal and 10,000 for each attack class, so the caps are expected rather than exact counts; the two rare classes (MITM, 1,214 flows; Fingerprinting, 1,001 flows) were kept in full. The sample contains 222,947 flows (55\% attack; per-class counts in Supplementary Table~\ref{tab:s1}). Identifiers (addresses, timestamps and source ports) were dropped. Each text field yields a presence flag, its log length, its share of special characters, and signature flags for SQL injection, cross-site scripting, path traversal and file upload. Low-cardinality fields are one-hot encoded, and destination ports are encoded as top-15 one-hot indicators plus a high-port flag. The 80 features with the highest one-vs-rest correlation are kept; feature ranking uses all sampled rows, which introduces a small optimistic bias in feature selection that affects all conditions equally and does not alter the comparisons between aggregation rules. Per class, 20\% of flows are held out for testing, and a stratified root set of 1,500 flows is withheld from training.

\subsection*{Sites, model and training}

Ten sites receive class-skewed partitions drawn from a Dirichlet distribution with concentration 0.5 over classes. Across seeds, sites held 6,572 to 38,532 flows with attack shares from 14\% to 95\%. The model is a multilayer perceptron with layer sizes 80, 24 and 15 ($d=2{,}319$ parameters), trained with 30 local stochastic gradient descent (SGD) steps per round (batch 128, learning rate 0.1) for 30 rounds. Supplementary Table~\ref{tab:s1} lists all parameters.

\subsection*{Baselines}

We compare against centralized training on pooled data (reference), local-only training, FedAvg\cite{mcmahan2017fedavg}, coordinate-wise trimmed mean\cite{yin2018trimmed} (three values trimmed per side), multi-Krum\cite{blanchard2017krum} (assumed $f=3$), secure norm-only aggregation (PVBRF-ID without the cosine test, same $B_t$), PVBRF-ID with the previous-aggregate reference, and PVBRF-ID with the root reference at $\tau=0$ and $\tau=-0.3$. Trimmed mean and multi-Krum are configured for up to three attackers in every scenario, including clean runs.

\subsection*{Attacks}

Two of ten sites (20\%) attack, or three (30\%) where stated. Sign-flip submits $-4u$. Label-flip maps each label $y$ to $(y+1)\bmod15$. Targeted flip relabels all attack classes as Normal. Backdoor sets a trigger on the three lowest-ranked features of 50\% of attack samples and relabels them Normal; ASR is the share of triggered attack test flows classified Normal. Boosted backdoor scales the same update by three. The adaptive attacker knows $B_t$, $\tau$ and the reference. It submits an update of norm $0.95B_t$ along the loss-ascent direction $-u_{\text{honest}}$, at cosine $\tau+0.02$ to the reference: 0.02 for $\tau=0$ and $-0.28$ for $\tau=-0.3$, where it exploits the tolerance. Supplementary Note~\ref{note:s1} documents a correction to this attacker.

\subsection*{Privacy attacker}

In no-attack runs, during the first 10 rounds, the attacker sees either the per-site updates (plaintext baselines) or one server's share vectors plus the accept bits (PVBRF-ID); the colluding view combines both shares. Class-mix recovery is the per-site Spearman correlation between the centred output-bias update and the centred class histogram. A top-3 hit occurs when the site's most over-represented class is among the attacker's top three (chance 0.20). Similarity is the correlation between output-bias update similarity and class-mix similarity over site pairs. A neighbour hit occurs when the attacker's most similar site is among the three truly most similar (chance 0.33). The accept-bit correlation, over all 30 rounds, relates a site's acceptance count to its atypicality, the total-variation distance between its class histogram and the root distribution. This attacker is generic.

\subsection*{Metrics and statistics}

Macro-F1 on the held-out test set is the primary metric; backdoor ASR and $\varepsilon$ are secondary. We report means with t-based 95\% confidence intervals over 10 seeds, and comparisons are paired by seed (same partition and test split).

\textit{Reproducibility.} Experiments were run with Python 3.13.9, NumPy 2.3.5 and pandas 2.3.3 on Windows 11 (64-bit). Total wall-clock time was 1.1--2.9 hours per configuration depending on whether commitments were enabled. All 870 main runs and 160 supplementary runs are deterministic given the seeds listed in Supplementary Table~\ref{tab:s1}; the four plaintext baselines, which are unaffected by the tolerance-exploiting attacker, reproduced identically across the two runs on every seed. Code, raw per-run results and the figure script are available at \url{https://github.com/nawaralseelawi/pvbrf-id}.

\section*{Results}

\subsection*{Federation recovers rare attack classes}

FedAvg reached macro-F1 $0.706\pm0.008$, against $0.536\pm0.019$ for local-only training (paired difference $+0.170\pm0.022$, all 10 seeds), and came within $0.010\pm0.012$ of centralized training (0.716; Table~\ref{tab:f1}). The gain concentrated in classes that a site rarely sees locally: recall rose for MITM from 0.73 to 1.00, for DDoS-ICMP from 0.77 to 0.99 and for Password from 0.38 to 0.61 (per-class recall across all 15 classes in Supplementary Table~\ref{tab:s3}; local-only figures are from site 0's model as a representative site). For an operator, these classes are the attacks a single wellhead or gathering station has never observed, and federation is what makes them detectable at that site. Macro-F1 per round for clean and adaptive 30\% runs is shown in Fig.~\ref{fig:convergence}.

\begin{figure}[t]
  \centering
  \includegraphics[width=\linewidth]{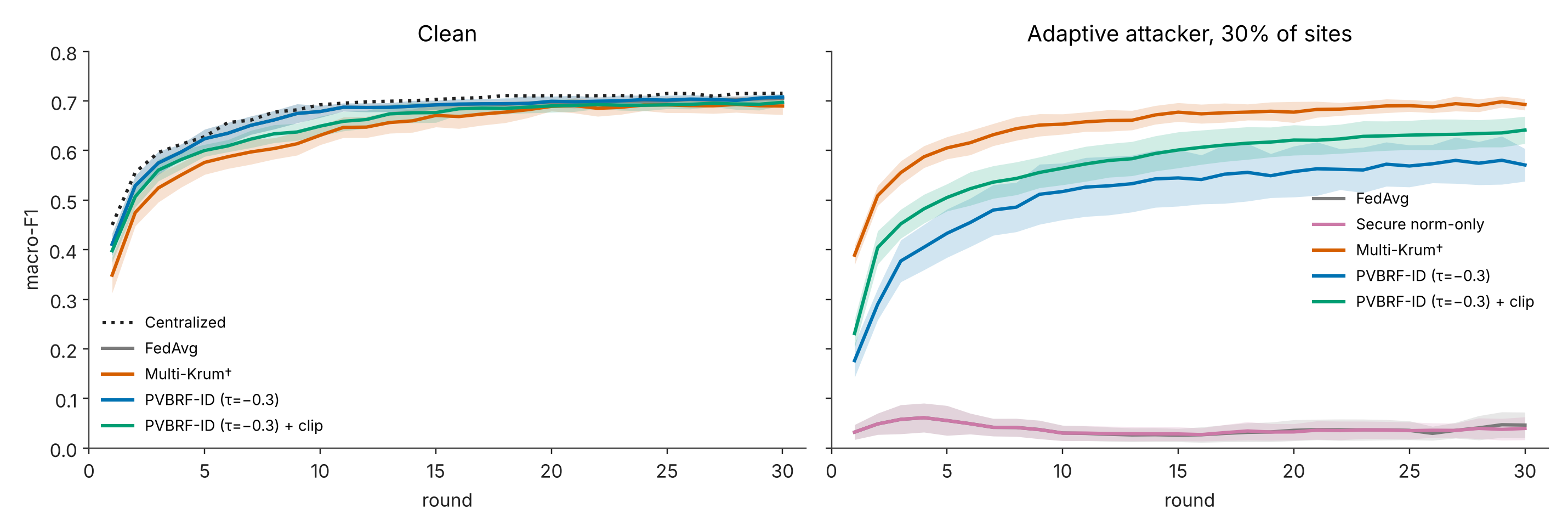}
  \caption{\textbf{Per-round macro-F1 during training.} Left: clean training; right: adaptive attacker controlling 30\% of sites. Lines show means over 10 seeds; shaded bands show t-based 95\% confidence intervals. PVBRF-ID ($\tau=-0.3$) with secure clipping (orange) converges at a similar pace to FedAvg (blue) on clean data and holds substantially higher accuracy than FedAvg under the adaptive attacker.}
  \label{fig:convergence}
\end{figure}

\subsection*{Non-adaptive attacks}

Sign-flip destroyed FedAvg (0.091), while PVBRF-ID with the root reference kept $0.696\pm0.013$ ($+0.605\pm0.030$ over FedAvg; Table~\ref{tab:f1}, Fig.~\ref{fig:f1}). PVBRF-ID was on par with multi-Krum ($+0.010\pm0.013$) and above trimmed mean ($+0.046\pm0.023$). On clean data PVBRF-ID was indistinguishable from FedAvg ($+0.002\pm0.005$) and above multi-Krum and trimmed mean ($+0.018\pm0.019$ and $+0.018\pm0.014$). Label-flip and targeted flip were weak at 20\% adversaries: FedAvg lost 0.033 and 0.017. PVBRF-ID rejected 90\% and 87\% of their updates, but there was little damage to prevent.

\begin{table}[t]
\centering
\caption{\textbf{Macro-F1 (mean $\pm$ t-based 95\% confidence interval over 10 seeds).} PVBRF-ID: proposed protocol; $\dagger$, the rule reads plaintext updates; --, not run. Adaptive columns use the tolerance-exploiting attacker for the $\tau=-0.3$ rows; rules without a reference are unaffected by the tolerance. Shaded rows: recommended designs.}
\label{tab:f1}
\footnotesize
\setlength{\tabcolsep}{3pt}
\resizebox{\linewidth}{!}{%
\begin{tabular}{@{}lcccccccc@{}}
\toprule
Method & Clean & Sign-flip & Label-flip & Targeted flip & Backdoor & Backdoor $\times3$ & Adaptive 20\% & Adaptive 30\% \\
\midrule
Centralized (reference) & $0.716\pm0.005$ & -- & -- & -- & -- & -- & -- & -- \\
Local-only & $0.536\pm0.019$ & -- & -- & -- & -- & -- & -- & -- \\
FedAvg & $0.706\pm0.008$ & $0.091\pm0.032$ & $0.673\pm0.015$ & $0.689\pm0.014$ & $0.693\pm0.011$ & $0.704\pm0.007$ & $0.393\pm0.062$ & $0.046\pm0.026$ \\
Trimmed mean$^\dagger$ & $0.690\pm0.012$ & $0.650\pm0.023$ & $0.677\pm0.013$ & $0.675\pm0.014$ & $0.685\pm0.013$ & $0.688\pm0.012$ & $0.649\pm0.023$ & $0.597\pm0.030$ \\
Multi-Krum$^\dagger$ & $0.690\pm0.017$ & $0.686\pm0.022$ & $0.686\pm0.022$ & $0.687\pm0.022$ & $0.683\pm0.017$ & $0.686\pm0.016$ & $0.683\pm0.021$ & $0.693\pm0.011$ \\
Secure norm-only & $0.706\pm0.008$ & $0.691\pm0.014$ & $0.677\pm0.017$ & $0.689\pm0.014$ & $0.693\pm0.011$ & $0.688\pm0.015$ & $0.398\pm0.063$ & $0.039\pm0.023$ \\
PVBRF-ID, previous-aggregate ref.\ ($\tau=0$) & $0.670\pm0.031$ & -- & -- & -- & -- & -- & $0.012\pm0.009$ & $0.017\pm0.020$ \\
PVBRF-ID, root ref.\ ($\tau=0$) & $0.701\pm0.014$ & $0.684\pm0.021$ & $0.687\pm0.014$ & $0.691\pm0.014$ & $0.696\pm0.008$ & $0.682\pm0.018$ & $0.677\pm0.019$ & $0.640\pm0.028$ \\
\rowcolor{recrow} PVBRF-ID, root ref.\ ($\tau=-0.3$) & $0.708\pm0.006$ & $0.696\pm0.013$ & $0.695\pm0.010$ & $0.696\pm0.012$ & $0.693\pm0.011$ & $0.689\pm0.012$ & $0.645\pm0.022$ & $0.571\pm0.033$ \\
\rowcolor{recrow} PVBRF-ID ($\tau=-0.3$) + clip & $0.697\pm0.006$ & $0.688\pm0.011$ & -- & -- & $0.688\pm0.012$ & $0.680\pm0.013$ & $0.665\pm0.019$ & $0.641\pm0.027$ \\
PVBRF-ID ($\tau=-0.3$) + clip + noise & $0.696\pm0.012$ & $0.684\pm0.013$ & -- & -- & $0.685\pm0.009$ & $0.677\pm0.014$ & $0.665\pm0.020$ & $0.643\pm0.027$ \\
\bottomrule
\end{tabular}}
\end{table}

\begin{figure}[t]
  \centering
  \includegraphics[width=\linewidth]{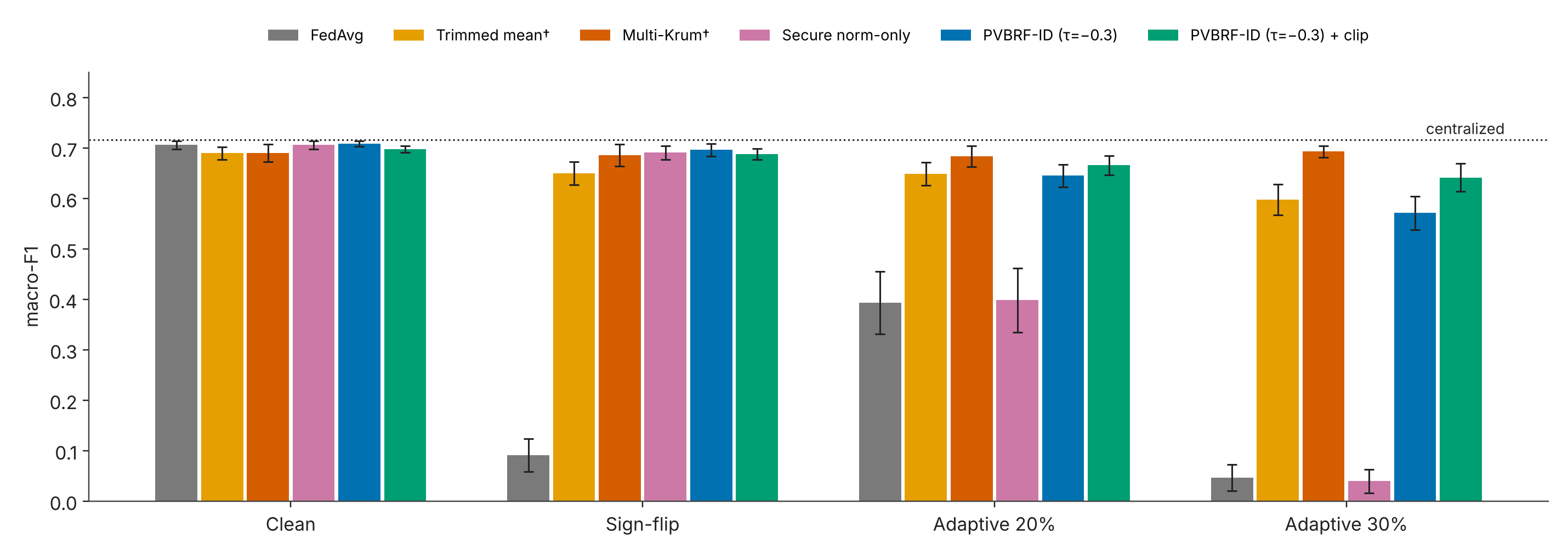}
  \caption{\textbf{Macro-F1 under clean training, sign-flip and adaptive attacks.} Bars show means over 10 seeds; error bars show t-based 95\% confidence intervals. The dotted line marks centralized training. $\dagger$, the rule reads plaintext updates. Adaptive 20\% and 30\% denote two and three attacking sites out of ten.}
  \label{fig:f1}
\end{figure}

\subsection*{Adaptive attackers and reference capture}

With the previous aggregate as the reference, PVBRF-ID collapsed under the adaptive attacker: macro-F1 fell to $0.012\pm0.009$ at 20\% adversaries and $0.017\pm0.020$ at 30\%, and PVBRF-ID rejected 85\% and 97\% of honest updates. This design was already weaker on clean data ($0.670\pm0.031$, 21\% honest rejections). With the root set as the reference, the collapse did not occur.

Under the tolerance-exploiting attacker, PVBRF-ID with $\tau=-0.3$ reached $0.645\pm0.022$ at 20\% and $0.571\pm0.033$ at 30\% adversaries. FedAvg reached 0.393 and 0.046, and norm-only aggregation 0.398 and 0.039 (PVBRF-ID advantage over norm-only: $+0.247\pm0.074$ and $+0.531\pm0.053$). The cosine test, not the norm bound, supplies the robustness. PVBRF-ID was comparable to trimmed mean ($-0.004\pm0.024$ and $-0.026\pm0.034$) but below multi-Krum ($-0.038\pm0.031$ and $-0.122\pm0.035$). The attackers were never rejected, because they sit inside the acceptance cone by construction: robustness comes from bounding their influence, not from detecting them.

The stricter threshold $\tau=0$ rejects more honest updates as training converges. At rounds 20--30 with $\tau=0$, PVBRF-ID rejected 32--33\% of honest updates as their updates gradually aligned with the root reference; $\tau=-0.3$ held rejections to 2--5\% across the same rounds. This late-training rejection is the main cost of tightening the threshold.

The tolerance has a measurable cost. Exploiting $\tau=-0.3$ cost PVBRF-ID $0.034\pm0.018$ and $0.075\pm0.034$ macro-F1 relative to an attacker that does not exploit it (10/10 seeds). Setting $\tau=0$ is more robust under the same attack ($+0.032\pm0.018$ and $+0.069\pm0.033$) but rejected 16.5\% of honest updates in clean runs, against 0.9\% at $\tau=-0.3$. Secure clipping recovered most of the loss at 30\% adversaries: $0.641\pm0.027$, which is $+0.071\pm0.031$ over unclipped PVBRF-ID (10/10 seeds), $-0.052\pm0.032$ against multi-Krum and $+0.044\pm0.031$ against trimmed mean, at a clean-data cost of 0.011.

\subsection*{Backdoors}

The filter did not stop backdoors. The attack success rate (ASR) was 1.000 for PVBRF-ID, FedAvg and norm-only aggregation, because backdoor updates are small and well aligned with the reference (Table~\ref{tab:backdoor}). Trimmed mean reached $0.054\pm0.051$, and multi-Krum was bimodal ($0.401\pm0.369$; ASR 1.0 in five seeds and about 0 in the other five). Boosting the backdoor update by a factor of three passed the loose bound $\kappa=3.5$ (PVBRF-ID $0.872\pm0.076$; norm-only $0.901\pm0.071$), and secure clipping alone removed it ($0.048\pm0.048$).

Against plain backdoors, clipping or noise alone had no effect (ASR 0.99). Clipping, noise and the aggregate clip together lowered ASR to 0.411, 0.150 and 0.091 at noise multipliers $z=0.5$, 1 and 2, at clean macro-F1 of 0.689, 0.680 and 0.615 (baseline 0.708; Fig.~\ref{fig:backdoor}). At $z=1$ the paired change was $-0.849\pm0.069$ in ASR and $-0.029\pm0.015$ in macro-F1, and trimmed mean remained better by $0.096\pm0.069$. The corresponding privacy budgets ($\varepsilon = 112.6$, 41.3 and 16.9 at $\delta=10^{-5}$ over 30 rounds) are too large to be a meaningful guarantee. We therefore present the noise as a heuristic backdoor defence, not as differential privacy. In operational terms, at $z=1$ about one triggered attack flow in seven is still classified as Normal.

\begin{table}[t]
\centering
\caption{\textbf{Backdoor mitigation.} Clean F1 and Adapt.\ 30\% F1 are macro-F1; ASR is the share of triggered attack test flows classified as Normal (lower is better). Adapt.\ 30\% F1 uses the tolerance-exploiting attacker. $\varepsilon$: R\'enyi-DP accounting of the Gaussian noise only, $\delta=10^{-5}$, 30 rounds; accept bits are not covered. $\dagger$, the rule reads plaintext updates; --, not run or not applicable.}
\label{tab:backdoor}
\footnotesize
\begin{tabular}{@{}lccccc@{}}
\toprule
Method & Clean F1 & Backdoor ASR & Backdoor $\times3$ ASR & Adapt.\ 30\% F1 & $\varepsilon$ \\
\midrule
Trimmed mean$^\dagger$ & 0.690 & $0.054\pm0.051$ & $0.056\pm0.054$ & 0.597 & -- \\
Multi-Krum$^\dagger$ & 0.690 & $0.401\pm0.369$ & $0.004\pm0.008$ & 0.693 & -- \\
PVBRF-ID ($\tau=-0.3$) & 0.708 & $1.000\pm0.001$ & $0.872\pm0.076$ & 0.571 & -- \\
\quad + clip & 0.697 & $0.993\pm0.009$ & $0.048\pm0.048$ & 0.641 & -- \\
\quad + clip + noise, $z=0.5$ & 0.696 & $0.992\pm0.007$ & $0.162\pm0.204$ & 0.643 & 113 \\
\quad + clip + noise + agg.\ clip, $z=0.5$ & 0.689 & $0.411\pm0.193$ & $0.009\pm0.017$ & -- & 113 \\
\quad + clip + noise + agg.\ clip, $z=1$ & 0.680 & $0.150\pm0.069$ & $0.002\pm0.002$ & 0.592 & 41 \\
\quad + clip + noise + agg.\ clip, $z=2$ & 0.615 & $0.091\pm0.034$ & $0.025\pm0.022$ & -- & 17 \\
\bottomrule
\end{tabular}
\end{table}

\begin{figure}[t]
  \centering
  \includegraphics[width=0.62\linewidth]{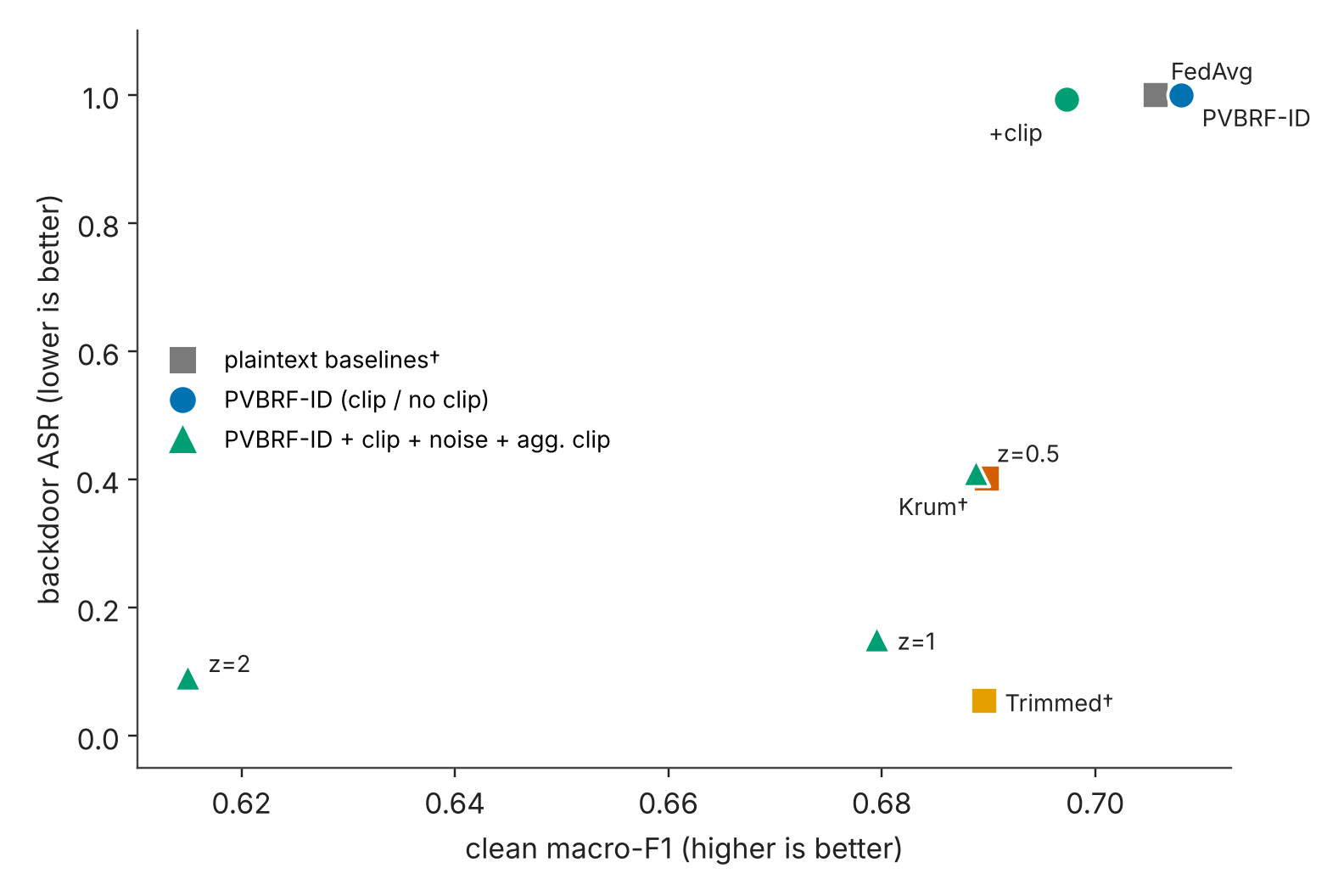}
  \caption{\textbf{Backdoor attack success rate against clean macro-F1.} Each point is the mean over 10 seeds. Squares: plaintext baselines ($\dagger$); circles: PVBRF-ID with and without secure clipping; triangles: PVBRF-ID with clipping, Gaussian noise (multiplier $z$) and the aggregate clip.}
  \label{fig:backdoor}
\end{figure}

\subsection*{Privacy against update inference}

From plaintext updates, the aggregator recovered each site's class mix with Spearman correlation $0.81\pm0.02$ and placed the site's dominant class in its top three 70\% of the time (chance 20\%; Table~\ref{tab:privacy}, Fig.~\ref{fig:privacy}). Neighbour identification succeeded 60\% of the time (chance 33\%). For an operator, this means a plaintext aggregator learns which attack types each field site experiences, which is itself sensitive security information. Given one server's shares, the same attacker scored $0.00\pm0.06$ and 0.20. This result is expected by construction, because a single additive share is uniformly distributed, and it serves as an end-to-end check of the implementation. With both servers colluding, the attacker regained plaintext-level success, so non-collusion is an assumption of PVBRF-ID, not a result.

The channels that remain open to one server are the accept bits and the revealed aggregate. The accept bit leaked little at $\kappa=3.5$: its correlation with site atypicality was $-0.04\pm0.25$, honest rejection was below 1\%, and three of ten seeds had no rejections. Whole-update similarity was a weak attack (correlation 0.24 for FedAvg and about 0.03 for multi-Krum), so we report the output-bias variant.

\begin{table}[t]
\centering
\caption{\textbf{Update-inference attacker success} (mean $\pm$ t-based 95\% confidence interval over 10 seeds, no-attack runs). Similarity $\rho$: output-bias update similarity against class-mix similarity over site pairs. Accept-bit $\rho$: acceptance count against atypicality over 30 rounds (defined in 7 of 10 seeds). The one-server row is uniform by construction and checks the implementation.}
\label{tab:privacy}
\footnotesize
\begin{tabular}{@{}lccccc@{}}
\toprule
View & Class-mix $\rho$ & Top-3 hit & Similarity $\rho$ & Neighbour hit & Accept-bit $\rho$ \\
\midrule
FedAvg (plaintext) & $0.81\pm0.02$ & $0.70\pm0.10$ & $0.27\pm0.05$ & $0.60\pm0.10$ & -- \\
Trimmed mean (plaintext) & $0.81\pm0.02$ & $0.69\pm0.10$ & $0.27\pm0.06$ & $0.57\pm0.08$ & -- \\
Multi-Krum (plaintext) & $0.81\pm0.02$ & $0.66\pm0.06$ & $0.31\pm0.07$ & $0.64\pm0.11$ & -- \\
PVBRF-ID: one server's share & $0.00\pm0.06$ & $0.20\pm0.08$ & $0.01\pm0.06$ & $0.29\pm0.10$ & $-0.04\pm0.25$ \\
PVBRF-ID: both servers collude & $0.81\pm0.02$ & $0.70\pm0.10$ & $0.27\pm0.05$ & $0.60\pm0.10$ & -- \\
Chance level & 0 & 0.20 & 0 & 0.33 & 0 \\
\bottomrule
\end{tabular}
\end{table}

\begin{figure}[t]
  \centering
  \includegraphics[width=0.62\linewidth]{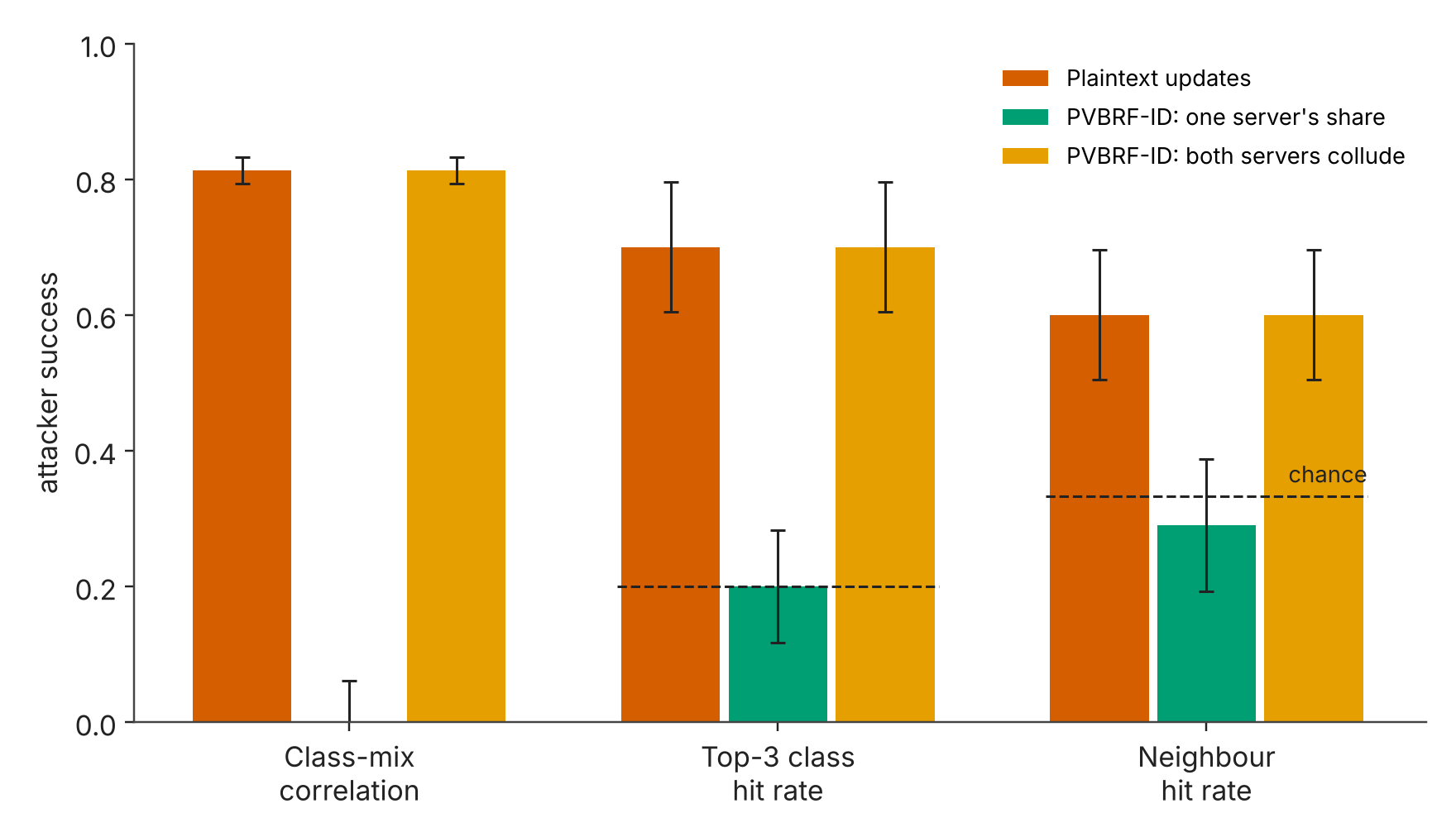}
  \caption{\textbf{Update-inference attacker success.} Bars show means over 10 seeds; error bars show t-based 95\% confidence intervals; dashed lines mark chance level. Plaintext updates, one server's share and the colluding view of both servers are compared on class-mix correlation, top-3 class hit rate and neighbour hit rate.}
  \label{fig:privacy}
\end{figure}

\subsection*{Fairness and the norm bound}

A tight norm bound rejects honest sites with unusual class mixes and leaks their atypicality through the accept bit (Table~\ref{tab:kappa}). At $\kappa=1.5$, PVBRF-ID rejected 43.5\% of honest updates. The most atypical third of sites was accepted 33.7\% of the time against 82.4\% for the most typical third (Spearman $\rho=-0.52$). At $\kappa=3.5$ the two acceptance rates were 99.6\% and 99.2\% ($\rho=-0.01$), and $\kappa=5$ was statistically identical to $\kappa=3.5$. Multi-Krum accepted atypical sites 40.7\% of the time against 93.5\% for typical ones ($\rho=-0.62$). It systematically excludes the sites that carry rare attack classes, which are the sites whose data the federation most needs.

\begin{table}[t]
\centering
\caption{\textbf{Norm bound $\kappa$, honest rejection and fairness} (clean runs unless stated). Accept typ./atyp.: acceptance rate of the third of sites closest to / farthest from the root class distribution. Rank $\rho$: Spearman rank correlation of acceptance count against atypicality. $\dagger$, the rule reads plaintext updates.}
\label{tab:kappa}
\footnotesize
\begin{tabular}{@{}lcccccc@{}}
\toprule
Method & Clean F1 & Sign-flip F1 & Honest rej.\ (\%) & Accept typ.\ (\%) & Accept atyp.\ (\%) & Rank $\rho$ \\
\midrule
PVBRF-ID, $\kappa=1.5$ & 0.688 & 0.683 & 43.5 & 82.4 & 33.7 & $-0.52$ \\
PVBRF-ID, $\kappa=2.5$ & 0.703 & 0.693 & 9.2 & 95.8 & 86.0 & $-0.30$ \\
PVBRF-ID, $\kappa=3.5$ & 0.708 & 0.696 & 0.9 & 99.2 & 99.6 & $-0.01$ \\
PVBRF-ID, $\kappa=5.0$ & 0.708 & 0.696 & 0.8 & 99.2 & 99.7 & $+0.02$ \\
Multi-Krum$^\dagger$ & 0.690 & 0.686 & 30.0 & 93.5 & 40.7 & $-0.62$ \\
\bottomrule
\end{tabular}
\end{table}

\subsection*{Communication cost}

At model dimension $d=2{,}319$ and ten sites, online communication was 1.72~MiB per round, 19.4 times the upload of plain FedAvg (Supplementary Table~\ref{tab:s2}). Excluding the placeholder cost of the ideal comparison, the ratio was 16.8; adding an estimate for ideal secure clipping raised it to about 27.4. Offline Beaver triples added 0.71~MiB per round. Upload alone was eight times the plain upload, because each coordinate is sent as two 16-byte shares. The audit and all inclusion proofs verified. Commitment verification took 8.1~s and the audit 0.8~s per round in a toy 512-bit group implemented in Python; these timings are upper bounds.

\section*{Discussion}

Where updates may not be exposed to the aggregator, PVBRF-ID keeps accuracy at the FedAvg level on clean data. With secure clipping, it stays within 0.02 to 0.05 macro-F1 of plaintext multi-Krum at 20\% to 30\% adversaries. It exposes one accept bit per site and a verifiable transcript, and it does not penalize atypical sites. The accept bit is a deliberate trade: FLOD\cite{dong2021flod} reveals only the aggregate, whereas PVBRF-ID reveals each decision so that sites and auditors can check inclusion. The price is about 19 to 27 times the communication of FedAvg, a second administrative domain and a clean root set. In an oilfield setting, the second domain could be a contractor or regional security operations centre, and the root set could come from a testbed capture. Where plaintext access to updates is acceptable, multi-Krum is the more robust choice at 30\% adversaries, at the cost of excluding atypical sites.

\textit{Deployment considerations.} PVBRF-ID targets settings where two independently operated servers are available---a corporate security operations centre and a regional or contractor domain is a natural split in the oilfield context---and where a small testbed capture can supply the root set. The protocol adds roughly 19 times the FedAvg communication overhead, which is feasible on corporate WAN links but would be expensive on low-bandwidth satellite or radio channels to remote wellheads. In such settings, the root-guided cosine test could run on plaintext updates at the cost of privacy, or less frequent rounds (e.g., daily rather than hourly) could amortize the bandwidth cost. Table~\ref{tab:goals} summarises the five design goals and the evidence from the evaluation.

\begin{table}[t]
\centering
\caption{\textbf{Design goals and evaluation evidence.}}
\label{tab:goals}
\footnotesize
\begin{tabular}{@{}lp{5cm}l@{}}
\toprule
Goal & Description & Evidence and status \\
\midrule
G1 & Clean accuracy close to FedAvg & PVBRF-ID 0.708 vs.\ FedAvg 0.706 (\checkmark) \\
G2 & Robust to five attacks and adaptive attacker & Table~\ref{tab:f1}; multi-Krum better at 30\% (\checkmark/\textasciitilde) \\
G3 & One server learns $\le$ one accept bit per update & Table~\ref{tab:privacy}; $\rho=0.00\pm0.06$ (\checkmark) \\
G4 & Inclusion of accepted updates verifiable & Audit and Merkle inclusion pass; unit tests cover tamper (\checkmark) \\
G5 & Honest atypical sites not systematically excluded & Table~\ref{tab:kappa}, $\kappa=3.5$: $\rho=-0.01$ (\checkmark) \\
\bottomrule
\end{tabular}
\end{table}

The comparison with multi-Krum has one asymmetry. Our adaptive attacker targets PVBRF-ID's norm bound and reference, while multi-Krum and trimmed mean face only a scaled loss-ascent update. Attacks crafted against distance-based rules\cite{fang2020local} would test those baselines on equal terms, and we plan to add them. Robustness in PVBRF-ID also comes from bounded influence, not detection: attackers inside the acceptance cone are never rejected.

The evaluation uses one public dataset with simulated sites and a multilayer perceptron whose macro-F1 (0.71) equals the centralized ceiling of our training budget. Hard classes remain hard: Fingerprinting recall is about 0.10, Uploading 0.43 and SQL injection 0.49. A second IIoT dataset and a pilot on operator traffic would test whether the robustness ordering transfers. The norm bound $\kappa$ and threshold $\tau$ were chosen on this data; calibrating them on seeds held out from reporting removes this source of optimism.

Four cryptographic components are idealized: secure comparison, per-coordinate range proofs, triple generation and the clipping and noise steps inside the aggregate. Their costs are estimates, and the Pedersen group is a toy. Instantiating the comparison with an existing secure-comparison protocol\cite{damgard2009comparison} and generating triples without a dealer would turn the cost estimates into measurements. The audit binds published aggregate shares to committed inputs but does not certify the accept decisions themselves. A server that deviates inside the secure comparison step can change a decision while keeping both bitmaps consistent; authenticated shares or a ZK proof of the accept bit would close this gap\cite{baum2016abort}.

Privacy rests on non-colluding servers and does not cover leakage from the released global model. The inference attacker is generic; formalizing it and adding an attacker that combines accept bits with the aggregates across rounds would bound what one server learns. Backdoors are not stopped, and the noise defence is heuristic at $\varepsilon\geq16.9$.

Federated intrusion detection across oilfield sites must keep site updates private and still reject poisoned ones. We showed that secret-shared aggregation can keep a root-guided direction test: PVBRF-ID matched FedAvg on clean data (macro-F1 0.708), held 0.641 against a 30\% adaptive attacker with secure clipping where FedAvg fell to 0.046, and did not exclude atypical sites. Next steps are a real comparison protocol with zero-knowledge range proofs, stronger backdoor defences, a pilot on operator data, and a Zero Trust policy plane that consumes accept bits and norm statistics.

\clearpage
\appendix
\section*{Supplementary Information}
\renewcommand{\thetable}{S\arabic{table}}
\setcounter{table}{0}

\begin{table}[h]
\centering
\caption{\textbf{Experimental parameters} (notebook configuration, identical across all runs).}
\label{tab:s1}
\footnotesize
\begin{tabular}{@{}lp{11.5cm}@{}}
\toprule
Parameter & Value \\
\midrule
Dataset file & Edge-IIoTset, \texttt{DNN-EdgeIIoT-dataset.csv} (2,219,201 flows, 15 classes) \\
Sampling & per-flow retention probability $\min(1,c/N_k)$; $c=100{,}000$ (Normal), $10{,}000$ (each attack class); sampler seed 12345 \\
Sampled flows & 222,947 (Normal 100,172; attack classes 9,947--10,205 each; MITM 1,214; Fingerprinting 1,001) \\
Features & 80 highest one-vs-rest correlation (payload signature, length, special-character, one-hot and port features) \\
Split per class & 20\% test; stratified root set of 1,500 flows withheld from training; split seed $1000+s$ \\
Sites & 10, Dirichlet($\alpha=0.5$) class partition; minimum 300 flows and 5--95\% attack share per site \\
Site sizes (all seeds) & 6,572--38,532 flows; attack share 14--95\% \\
Model & MLP 80--24--15 (ReLU, softmax), $d=2{,}319$ parameters; He initialization \\
Local training & 30 SGD steps per round, batch 128, learning rate 0.1 \\
Rounds / seeds & 30 rounds; seeds $s\in\{0,\dots,9\}$ (training rng $10{,}000+s$, crypto rng $20{,}000+s$) \\
Norm bound & $\kappa=3.5$ (calibrated on 6 clean FedAvg rounds of seed 0: ratio median 1.22, p90 1.60, p99 1.93, max 2.07); sweep $\{1.5,2.5,5\}$ \\
Cosine threshold & $\tau=-0.3$ (recommended) and $\tau=0$; previous-aggregate reference at $\tau=0$ \\
Baselines & trimmed mean (3 per side); multi-Krum ($f=3$, 7 of 10 selected) \\
Attacks & 2 of 10 sites (3 for 30\%); sign-flip $-4u$; label-flip $y\!\to\!(y+1)\bmod 15$; targeted flip; backdoor (50\% of attack samples, trigger value 4.0 on 3 lowest-ranked features); boost $\times3$; adaptive at norm $0.95B_t$ \\
Mitigation & secure clip $C_t=\norm{r_t}$; Gaussian noise $z\in\{0.5,1,2\}$; aggregate clip $3|r_{t,k}|+0.5\,\mathrm{rms}(r_t)$; $\delta=10^{-5}$ \\
Privacy attacker & observes rounds 1--10 of no-attack runs; accept-bit correlation over all 30 rounds \\
Fixed-point / field & 16 fractional bits; $Q=2^{127}-1$; range bound $2^{20}$; toy 512-bit Pedersen group \\
Environment & Python 3.13.9, NumPy 2.3.5, pandas 2.3.3, Windows 11, 8 CPU cores \\
\bottomrule
\end{tabular}
\end{table}

\begin{table}[h]
\centering
\caption{\textbf{Communication per round} at $d=2{,}319$ and $n=10$ sites, measured by the protocol ledger. Plain FedAvg upload is $n\,d\times4$ bytes $=92{,}760$ bytes (float32).}
\label{tab:s2}
\footnotesize
\begin{tabular}{@{}p{8.6cm}rp{4.2cm}@{}}
\toprule
Phase & Bytes & Note \\
\midrule
Upload (two shares, two randomness values, two commitments per site) & 743,040 & $8.01\times$ plain upload \\
Secure tests: masked-vector openings for the norm test & 742,080 & one $d$-vector per server per site \\
Secure tests: scalar openings for the cosine test & 640 & \\
Secure tests: ideal comparison (placeholder, $2\times127$ elements, 8 rounds) & 243,840 & not measured, analytic estimate \\
Aggregate shares and summed randomness & 74,240 & \\
Publish (Merkle roots and bitmaps) & 68 & \\
\midrule
Online total & 1,803,908 & 1.72 MiB; $19.45\times$ plain upload \\
Online total excluding the comparison placeholder & 1,560,068 & $16.82\times$ \\
Online total plus an estimate for ideal secure clipping & 2,545,988 & $27.45\times$ \\
Offline (Beaver triples, trusted dealer) & 743,040 & 0.71 MiB \\
\midrule
Commitment verification (10 sites, two servers) & 7.7~s & toy 512-bit group, pure Python \\
Audit of both aggregate shares & 0.76~s & \\
\bottomrule
\end{tabular}
\end{table}

\begin{table}[h]
\centering
\caption{\textbf{Per-class recall on the held-out test set} (mean over 10 seeds, clean training). Local-only values are from site 0's model.}
\label{tab:s3}
\footnotesize
\begin{tabular}{@{}lcccc@{}}
\toprule
Class & Local-only & FedAvg & PVBRF-ID ($\tau=-0.3$) & Centralized \\
\midrule
Normal & 1.00 & 1.00 & 1.00 & 1.00 \\
Backdoor & 0.61 & 0.77 & 0.74 & 0.85 \\
DDoS-HTTP & 0.54 & 0.72 & 0.69 & 0.68 \\
DDoS-ICMP & 0.77 & 0.99 & 0.99 & 0.99 \\
DDoS-TCP & 0.69 & 0.84 & 0.80 & 0.71 \\
DDoS-UDP & 0.82 & 1.00 & 1.00 & 1.00 \\
Fingerprinting & 0.14 & 0.11 & 0.10 & 0.10 \\
MITM & 0.73 & 1.00 & 1.00 & 1.00 \\
Password & 0.38 & 0.61 & 0.62 & 0.58 \\
Port scanning & 0.64 & 0.69 & 0.73 & 0.82 \\
Ransomware & 0.53 & 0.60 & 0.63 & 0.52 \\
SQL injection & 0.60 & 0.49 & 0.49 & 0.54 \\
Uploading & 0.39 & 0.43 & 0.43 & 0.42 \\
Vulnerability scanner & 0.81 & 0.80 & 0.79 & 0.79 \\
XSS & 0.49 & 0.69 & 0.74 & 0.76 \\
\bottomrule
\end{tabular}
\end{table}

\paragraph{Supplementary Note S1. Correction to the adaptive attacker.}\label{note:s1}
In the main run (870 runs) the adaptive attacker placed its update at cosine $\max(\tau,0)+0.02$ to the reference, so for the tolerant threshold $\tau=-0.3$ it never used the tolerance and sat at cosine $0.02$. A worst-case attacker submits at cosine $\tau+0.02$, that is $-0.28$ for $\tau=-0.3$, partly against the reference while still inside the acceptance region of \eqref{eq:tolerant}. The supplementary run (160 runs) repeats only the adaptive scenarios with this attacker, on the same ten seeds, for the $\tau=-0.3$ designs and the four baselines. Because the baselines have no reference, their results must reproduce the main run exactly; they did, on every seed. Table~\ref{tab:f1} and Table~\ref{tab:backdoor} report the tolerance-exploiting attacker for all $\tau=-0.3$ rows. Relative to the non-exploiting attacker, exploiting the tolerance cost PVBRF-ID $0.034\pm0.018$ macro-F1 at 20\% and $0.075\pm0.034$ at 30\% adversaries (10 of 10 seeds in both cases).

\bibliographystyle{unsrtnat}
\bibliography{ref}

\end{document}